\documentclass[pra,superscriptaddress,twocolumn]{revtex4}
%galley, showpacs
\usepackage{epsfig,amsbsy,amsmath,amsfonts,graphics,latexsym}
\usepackage{eepic,bm,psfig,color,subfigure}
\def\bra#1{\mathinner{\langle{#1}|}}
\def\ket#1{\mathinner{|{#1}\rangle}}

\newcommand{\beq}{\begin{equation}}
\newcommand{\eeq}{\end{equation}}

\newcommand{\tr}{{\tt Tr}} 
% Theorem environments
\newtheorem{propn}{Proposition}{}{}
{}{}
{}{}
{}{}
{}{}
\newenvironment{proof}{{\bf Proof:}}{~\mbox{$\Box$} \\}
%%%%%%%%%%%%%%%%%%%%%%%%%%%%%%%%%%%%%%%%%%%%%%%%%%%%%%%%%%%%%%%%%%%%%%%%%
\begin{document}
\title{Decoherence-free quantum information in the presence of dynamical
  evolution }   
\author{Peter G. Brooke}
\email{pgb@ics.mq.edu.au}
\affiliation{Centre for Quantum Computer Technology and Department of Physics,
Macquarie University, Sydney, New South Wales 2109, Australia}
\author{Manas K. Patra}
\affiliation{Department of Computing and Mathematics, University of Western
  Sydney, Locked Bag 1797, Penrith South DC, New South Wales 1797, Australia}
  \affiliation{Department of Computer Science, University of York, Heslington, York, United Kingdom YO10 5DD}
\author{James D. Cresser}
\affiliation{Centre for Quantum Computer Technology and Department of Physics,
Macquarie University, Sydney, New South Wales 2109, Australia}
\date{\today}
\begin{abstract}
We analyze decoherence-free (DF) quantum information in the presence of an
arbitrary non-nearest-neighbor bath-induced system Hamiltonian using a
Markovian master equation.  We show that the most
appropriate encoding for $N$ qubits is probably contained within the $\sim
\tfrac{2}{9} N$ excitation subspace.  We give a timescale 
over which one would expect to apply other methods to correct for the system
Hamiltonian.  In order to remain applicable to experiment, we then focus on
small systems, and present examples of DF quantum information for three and
four qubits.  We give an encoding for four qubits that,  while quantum
information remains in the two-excitation subspace, protects against an
arbitrary bath-induced system Hamiltonian.  Although our results are general
to any system of qubits that satisfies our assumptions, throughout the paper we
use dipole-coupled qubits as an example physical system.
\end{abstract}
\maketitle
%%%%%%%%%%%%%%%%%%%%%%%%%%%%%%%%%%%%%%%%%%%%%%%%%%%%%%%%%%%%%%%%%%%%%%%%%%%%%
\section{Introduction}
%%%%%%%%%%%%%%%%%%%%%%%%%%%%%%%%%%%%%%%%%%%%%%%%%%%%%%%%%%%%%%%%%%%%%%%%%%%%
Storing quantum information for long periods is
difficult: excited quantum states decohere.  One method of counteracting the
effects of decoherence is to encode logical information into decoherence-free
subspaces and subsystems
(DFSs)~\cite{Pal96,Duan97,Duan98,Zan97a,Zan97b,Zan98,Lidar98,Knill00}.  
These are groups of states that have robust symmetry
properties, and that in ideal cases serve as perfect quantum memory. See
Ref.~\cite{Lidar03} for a comprehensive review of DFS theory.  As well as
theoretical developments, there has been much success experimentally.  For 
example, DFSs have been prepared in optical systems, in NMR, and in
ion-traps~\cite{Kw00,Kiel01,Lo01,Bou04}, and there has been a proposal
for decoherence-free (DF) quantum-information processing in
nitrogen-vacancy (NV) centers in diamond~\cite{Brooke07a}.          

Here, we focus on the semigroup formulation of DFSs for a number of reasons.
First, the unitary and nonunitary evolution of the system are naturally
separated.    Second, when the DF condition is 
satisfied, there is only one nonunitary Lindblad
operator~\cite{Zan98,Lidar98}.  Finally, the scalability  properties of 
various encodings can be quantified~\cite{Lidar98,Lidar03}.  The separation of
unitary and nonunitary evolution enables the detrimental effect of any
environment-induced unitary evolution to be isolated.  We refer to
`environment-induced', or equivalently `bath-induced', evolution because both
the unitary and nonunitary parts of the master equation result 
directly from the system-environment interaction Hamiltonian.  For a single
atom, the unitary part gives rise to the Lamb-shift and the nonunitary part
gives rise to spontaneous decay.  For many qubits, decoherence can result
indirectly from the action of Lamb-shift type terms because these can cause
quantum information to leak into states that decay. 
A full analysis of the effect of unitary evolution on DF quantum information
in Markovian systems is the purpose of this paper.  
Throughout the paper, when we refer to a system Hamiltonian causing
leakage of DF quantum information, we are referring
explicitly to the action of a Lamb-shift type Hamiltonian on DF states.    

It is well-known that the condition for DF dynamics derived in
Ref.~\cite{Lidar98} does not rule out the possibility of an
environment-induced unitary operator 
evolving information encoded in a DFS into non-DFS states.  The results in
Ref.~\cite{Lidar98} were extended in Ref.~\cite{Shab05}, where the authors
gave conditions for DF quantum information that accounted for a system
Hamiltonian.  Here, we focus on quantum information 
that is encoded within states that satisfy the DF condition~\cite{Lidar98},
but that could evolve into non-DF states due to a system Hamiltonian.  We
emphasize that for 
Markovian systems,  the property of the bath operators is such that generally,
the presence of a non-nearest-neighbor system Hamiltonian is unavoidable.   
Physical justification for the regime studied in this paper is given in
Ref.~\cite{Brooke07b}.  Here, the authors showed that for closely-spaced
dipole-coupled qubits, the nonunitary evolution can be approximated as being
the same as that for co-located qubits (and so satisfy the DF requirements
derived in Ref.~\cite{Lidar98}), but the unitary evolution depends on both
the position and orientation of the dipoles.  Our results are applicable to
any physical system that can be appropriately described by a Markovian master
equation and, in light of recent experimental progress, are directly
applicable to NV centers in diamond~\cite{Da94}. 

The paper is structured as follows.  In Sec.~\ref{sec:setup}, we
state the conditions on the bath operators that lead to a bath-induced
unitary evolution.  In Sec.~\ref{sec:exis}, we give a condition 
for subspaces which are immune to nonunitary evolution.  This condition is
weaker than that derived in Ref.~\cite{Shab05}, but stronger than that derived
in Ref.~\cite{Lidar98}.  Then, in Sec.~\ref{sec:nqs} we analyze the
scalability---which we define as the encoding efficiency multiplied
by the proportion of the Hilbert space that is DF---for DF
encoding in the presence of a system Hamiltonian.  An interesting consequence
of our results is that, when the 
effect of a system Hamiltonian is included, the most suitable subspace for
quantum information storage in $N$ qubits is probably the subspace with $\sim
\tfrac{2}{9} N$ excitations.  Then, we give a timescale over which other 
methods will have to be applied to counter the effect of the system
Hamiltonian. In Sec.~\ref{sec:sqs}, we concentrate on three and four qubit
systems.  We give conditions on the system Hamiltonian for DF
quantum information, and we give an explicit encoding for four qubits that,
while quantum information remains in the two-excitation subspace, protects
against errors induced by an arbitrary system Hamiltonian.  We hope that this
encoding is directly relevant to experimental quantum information processing. 
%%%%%%%%%%%%%%%%%%%%%%%%%%%%%%%%%%%%%%%%%%%%%%%%%%%%%%%%%%%%%%%%%%%%%%%%%%%%%
\section{Master Equation}
\label{sec:setup}
%%%%%%%%%%%%%%%%%%%%%%%%%%%%%%%%%%%%%%%%%%%%%%%%%%%%%%%%%%%%%%%%%%%%%%%%%%%%
For clarity and to establish our notation, we briefly summarize a derivation
of the Lindblad master equation.  A system $A$ coupled to a bath
$B$ can be described by the Hamiltonian $
\mathbf{H} = \mathbf{H}_A \otimes \mathbf{I}_B + \mathbf{I}_A \otimes
\mathbf{H}_B + \mathbf{H}_I$,  where $\mathbf{H}_A$($\mathbf{H}_B$) the system
(bath) Hamiltonian acts on the system (bath) Hilbert space,
$\mathbf{I}_A$($\mathbf{I}_B$) is the identity operator on the system (bath)
Hilbert space, and $\mathbf{H}_I$ is the interaction Hamiltonian that contains
all non-trivial couplings between the system and the bath and is written as
\begin{align}
\label{eq:hint}
 \mathbf{H}_I = \sum_{\alpha} \mathbf{S}_\alpha \otimes \mathbf{B}_\alpha,
\end{align}  
for $\mathbf{S}_\alpha$($\mathbf{B}_\alpha$) the system (bath) operators.
Treating the interaction Hamiltonian as a perturbation, the equation of motion
for the density matrix $\chi$ of the system and bath in the interaction picture
is $(\hbar = 1)$
\begin{align}
\dot{\chi}(t) = -i [ \mathbf{H}_I(t), \chi(t)],
\end{align}
which gives
\begin{align}
\label{eq:exact}
\dot{\rho}(t) =  - \int_0^t \text{d}s
\tr_B [\mathbf{H}_I(t), [ \mathbf{H}_I(s), \chi(s)]], 
\end{align}
where $\tr_B$ denotes the trace over the bath and $\rho$ is the (reduced)
system density matrix.  So, for the rest of this paper, we focus on the
Hilbert space of the system.  Making the Born-Markov approximations,
Eq.~\eqref{eq:exact} becomes 
\begin{align}
\dot{\rho}(t) = - \int_0^\infty \text{d}s \tr_B [\mathbf{H}_I(t), [
    \mathbf{H}_I(s), \rho(t) \otimes R_0]] 
\end{align}
for $R_0$ the initial density operator of the (stationary) bath.
Introducing the correlation function
\begin{align}
C_{\alpha \beta}(s) \equiv
\tr_B [\mathbf{B}^\dagger_{\alpha}(s)\mathbf{B}_{\beta}(0) R_0],
\end{align}
which we note has real and imaginary
parts because in general $\mathbf{B}^\dagger_{\alpha}(s)$ and 
$\mathbf{B}_{\beta}(0)$ do not commute, the master equation in the
rotating-wave approximation is written~\cite{Lind}  
\begin{align}
\label{eq:lme}
\dot{\rho} &= \text{L}[\rho] =
-i[\mathbf{H}_{S}, \rho ]
+ \text{L}_{\text{D}}[\rho], \nonumber \\
\text{L}_{\text{D}}[\rho] &= 
\sum_{\alpha,\beta} a_{\alpha\beta}
\text{L}_{\mathbf{S}_{\alpha},\mathbf{S}_{\beta}} [\rho], \nonumber \\ 
\text{L}_{\mathbf{S}_{\alpha},\mathbf{S}_{\beta}} [\rho] &= 
   [\mathbf{S}_{\beta}, \rho
   \mathbf{S}_{\alpha}^{\dagger}] 
   + [\mathbf{S}_{\beta}\rho,
   \mathbf{S}_{\alpha}^{\dagger}]
 , 
\end{align}
for the system Hamiltonian 
\begin{align}
\mathbf{H}_S =  \sum_{\alpha,\beta}
b_{\alpha \beta} \mathbf{S}_{\alpha}^{\dagger}
\mathbf{S}_{\beta},
\end{align}  
and where $a_{\alpha \beta} = \Gamma_{\alpha   \beta} +   \Gamma^*_{
  \beta   \alpha}$ and $b_{\alpha \beta} = \frac{1}{2i}[ \Gamma_{\alpha
  \beta} - \Gamma^*_{ \beta \alpha}]$ for 
\begin{align} 
\Gamma_{\alpha \beta} \equiv \int_0^\infty \text{d}s
\text{e}^{i \omega_0 s} C_{\alpha \beta}(s).
\end{align}
The correlation function $C_{\alpha \beta}(s)$ contains all the information
about the physical system, and satisfies the Kramers-Kronig relations. 
For a single-atom, $\mathbf{H}_S$ describes the Lamb-shift, and
$\text{L}_{\text{D}}[\rho]$ describes the spontaneous emission.
For two or more dipole-coupled qubits, the off-diagonal terms in
the Hermitian matrix $(a_{\alpha \beta})$ describe the rate of spontaneous
emission between separate physical qubits, and the off-diagonal terms in the 
matrix $(b_{\alpha \beta})$ describe the coherent dipole-dipole interaction
between separate qubits.  Generally, the terms in $(b_{\alpha \beta})$ diverge
and require renormalization~\cite{Brooke07b}.
%%%%%%%%%%%%%%%%%%%%%%%%%%%%%%%%%%%%%%%%%%%%%%%%%%%%%%%%%%%%%%%%%%%%%%%%%%%%%
\subsection{Decoherence-free subspaces}
\label{sec:dfsc}
%%%%%%%%%%%%%%%%%%%%%%%%%%%%%%%%%%%%%%%%%%%%%%%%%%%%%%%%%%%%%%%%%%%%%%%%%%%%
A subspace $\tilde{\rho}$ of the system Hilbert space $\rho$ is a DFS if it
satisfies $\text{L}_{\text{D}}[\tilde{\rho}] = 0$~\cite{Lidar98}.  The
condition that 
leads to $\text{L}_{\text{D}}[\tilde{\rho}] = 0$ being satisfied is
$a_{\alpha\beta} \equiv a$~\cite{Lidar02a}.   For dipole-coupled
qubits, this implies that the spontaneous emission rate between separated
qubits is the same as that for individual qubits---all qubits experience the
same nonunitary couplings.  Here, we focus on the
effect of  
the system Hamiltonian on $\tilde{\rho}$.  Note that when we refer to the
system Hamiltonian, we are not referring to $\mathbf{H}_A$, but $\mathbf{H}_S$
which results from $\mathbf{H}_I$.  We consider the situation in which
the Hamiltonian $\mathbf{H}_S$ evolves information encoded in a DFS into
non-DFS states~\cite{Lidar98,Bacon99}.  We are interested in the physically
applicable perturbative
regime $(a_{\alpha \beta}) = a \mathbf{X} + 
\epsilon \mathbf{A}$ and $(b_{\alpha  \beta}) = b \mathbf{X} + \epsilon
\mathbf{B}$ 
for $\mathbf{X}$ a matrix of size $(a_{\alpha \beta})$
with all entries equal to one, $\mathbf{A}$ and $\mathbf{B}$ matrices also
of size $(a_{\alpha \beta})$, but with arbitrary entries, and expansion
parameter $\epsilon \ll 1$.  

Specifically, we concentrate on the regime $(a_{\alpha \beta}) = a \mathbf{X}$
and 
$(b_{\alpha  \beta}) = b \mathbf{X} + \epsilon 
\mathbf{B}$.  We do this for two reasons.  First, it has been shown that DFSs
are stable to first order even in the presence of a nonunitary symmetry
breaking perturbation~\cite{Bacon99}.  All else being equal, small changes
in the form of $\text{L}_{\text{D}}[\rho]$ do not prevent infinite-lifetime
quantum information storage.  Second, the strict requirement that all qubits
experience the same 
environment is unlikely to ever be met in practice.  If one perturbs the
system-environment couplings, the perturbation applies to both the
nonunitary and unitary parts of the master equation.     

We illustrate the regime of interest to this paper using dipole-coupled qubits.
For these qubits the appropriate expansion parameter is physical separation
$r$.  So, we concentrate on collections of qubits that satisfy  
\begin{align}
\label{eqs:rexp1}
a_{\alpha \beta} &= a + \mathcal{O}(r^2),\\
b_{\alpha \beta} &= b + b^\prime_{\alpha \beta}(r) + \mathcal{O}(r^2),
\label{eqs:rexp2}
\end{align}
where the system of qubits is such that any $\mathcal{O}(r)$ contribution to 
$a_{\alpha \beta}$ can be neglected.  Eqs.~\eqref{eqs:rexp1} and
~\eqref{eqs:rexp2} are satisfied by  
closely-spaced dipole-coupled qubits---explicit forms for $a_{\alpha \beta}$
and $b_{\alpha \beta}$ are given in Ref.~\cite{Brooke07b}. Note that
$b^\prime_{\alpha \beta}(r)$ can be many orders of magnitude larger 
than $a$.  

We emphasize that the results presented here are applicable to any
Markovian system for which $\mathbf{A} = \mathbf{0}$ and $\mathbf{B} \ne
\mathbf{0}$.  This is the most general case for Markovian systems that satisfy
the DF condition, and is more likely to be realised in the laboratory than
$\mathbf{A} = \mathbf{0}$ and $\mathbf{B} = \mathbf{0}$.  Note that if 
$\mathbf{A} = \mathbf{0}$ and $\mathbf{B} = \mathbf{0}$, then 
$\Gamma_{\alpha \beta}$ is strictly qubit independent, and $\mathbf{H}_S$ does
not evolve information encoded in $\tilde{\rho}$ into non-DF
states~\cite{Zan97a,Zan98}.    

For non-Markovian systems, leakage resulting from Lamb-shift type terms in the
master equation can be accounted for using dynamical decoupling methods, or
`bang-bang' pulses~\cite{Viola99,Wu002,Byrd03,Byrd05}.  Here, we focus on {\it
  passive} error correction that does not require fast and strong pulses and
so is more amenable to experimental implementations.  Also, bang-bang pulses
are inherently non-Markovian, and applying similar
techniques in a Markovian environment is difficult.
%%%%%%%%%%%%%%%%%%%%%%%%%%%%%%%%%%%%%%%%%%%%%%%%%%%%%%%%%%%%%%%%%%%%%%%%%%%%%
\section{Completely-decoherence-free subspaces}
\label{sec:exis}
%%%%%%%%%%%%%%%%%%%%%%%%%%%%%%%%%%%%%%%%%%%%%%%%%%%%%%%%%%%%%%%%%%%%%%%%%%%%
The condition $a_{\alpha \beta} \equiv a$ means that the dissipator can be
written   
\begin{align}
\text{L}_{\text{D}}[\rho] =  \lambda (2 \mathbf{J} \rho \mathbf{J}^\dagger
 - \mathbf{J}^\dagger \mathbf{J} \rho  - \rho \mathbf{J}^\dagger \mathbf{J}),
\end{align}
where $\lambda$ is the only nonzero eigenvalue of $a_{\alpha \beta}$ and
$\mathbf{J}  = \sum_{\alpha}^N \mathbf{S}_\alpha$ for $N$ the number of
qubits.  The jump operator can be written
\begin{align}
\mathbf{J} = \sum_{i=1}^N \hat{\sigma}_{i-} =  \sum_{i=1}^N I \otimes \cdots I
\otimes 
\underbrace{\hat{\sigma}_{-} }_{i^{\text{th}} \text{ qubit}}\otimes I \cdots,
\end{align}
where $I$ is the $2\times 2$ identity operator.  So, $\hat{\sigma}_{ia}$, for
$a= +,-,z$, acts on the $i^\text{th}$ qubit, and satisfies
$[\hat{\sigma}_{iz},\hat{\sigma}_{j\pm}]= \pm 2 \delta_{ij} \hat{\sigma}_{i
  \pm}$, and $[\hat{\sigma}_{i+},\hat{\sigma}_{j-}]= \delta_{ij}
\hat{\sigma}_{iz} $. The operators $\mathbf{J}$, $\mathbf{J}^\dagger$, and
$\mathbf{J}_z$ act on the system Hilbert space $\mathcal{H}\equiv
\otimes^N\mathbb{C}^{2}$.  From the representation theory of $su(2)$,
$\mathcal{H}$ can be decomposed into irreducible
components~\cite{Gilmore}---see Fig~\ref{fig:3quenc} for a three qubit
example.  Each  
irreducible representation (irrep) $V$ is generated by a unique 
  lowest weight vector $v$ satisfying $\mathbf{J}\cdot v \equiv
\mathbf{J}_{-} \cdot v^{-} = 0$.  The vector $v^{-}$ is an eigenvector of
\begin{align}
\mathbf{J}_z = \sum_{i=1}^N \hat{\sigma}_{iz}.
\end{align}
The vector $\mathbf{J}^k\cdot v$, for $k$ applications of $\mathbf{J}$ on
$v$ is an eigenvector of $\mathbf{J}_z$ with eigenvalue $-(N-2k)$. 
The decoherence operator $\mathbf{J}$ causes the system to decay to its lowest
weight.  The DF condition $\text{L}_{\text{D}}[V] = 0$ for $V$ a subspace
implies that $\mathbf{K} \cdot V=0$ for all jump operators
$\mathbf{K}$~\cite{Lidar98}.  
Here, we only have one jump operator $\mathbf{J}$, so the subspace
$V$ satisfies $\mathbf{J} \cdot V=0$, and consists of combinations of lowest
weight vectors.  For example, the states $\ket{\sf b}$ and $\ket{\sf c}$ in
Fig~\ref{fig:3quenc} satisfy $\mathbf{J} \ket{\sf b} = 0$ and  $\mathbf{J}
\ket{\sf c} = 0$.  They are states that do not decay to $\ket{\sf a}$ and so
are DF. The operator $\mathbf{J}^\dagger$ acts like
$\mathbf{J}^\dagger \ket{\sf b} = \ket{\sf e}$ and $\mathbf{J}^\dagger
\ket{\sf c} = \ket{\sf f}$, but does not cause transitions
from $\ket{\sf b}$ to $\ket{\sf f}$, or from $\ket{\sf c}$ to $\ket{\sf e}$.

The definition for DF dynamics~\cite{Lidar03}---a subspace
$W$ of states is DF if any $\rho(0) \in W$  evolves into a state $\rho(t)$
such that the evolution 
map $\rho(0) \rightarrow \rho(t)$ is unitary $\forall \;
t$---is relevant here.  We are interested in the evolution of quantum
information under the action of both $\text{L}_{\text{D}}$ and
$\mathbf{H}_S$.  Ensuring $\text{L}_{\text{D}}[\rho] = 0$ for some subspace is not
sufficient to guarantee DF dynamics for all time.  We call a subspace $W$
completely-decoherence-free (CDF) if it satisfies the following conditions 
\begin{enumerate}
\item
\( \text{L}_{\text{D}}[W] = 0\) and 
\item
\(\rho(t)\in W \quad \forall \; t \).
\end{enumerate}
These conditions ensure that the evolution is unitary.  We note that the result
at the replica symmetric point derived in Refs.~\cite{Zan97a,Zan98} ensures
CDF dynamics, and that the conditions stated here still permit the transfer of
encoded information between states within $W$.  For the purposes of this
paper, when we refer to transfer of encoded information, we are referring to 
the effect of the off-diagonal terms of $\mathbf{H}_S$ in the Clebsch-Gordan
basis on quantum information. A basic characterization of
completely-decoherence-free subspaces is given below.       
\begin{propn} \label{propn:BasicDF}
Let $V$ be the subspace of lowest weight vectors. A
necessary and sufficient condition that $V$ contain a CDF subspace $W$ is
$\mathbf{H}_S \cdot W \subset W$. In particular, $\mathbf{H}_S$ can be
diagonalized in $W$.   
\end{propn}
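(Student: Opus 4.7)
The plan is to reduce the two CDF conditions to a single requirement on $\mathbf{H}_S$, exploiting the explicit form of the dissipator worked out above.

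First I would note that $W\subset V$ automatically enforces the first CDF condition. Every vector in $V$ is a lowest weight vector of $\mathbf{J}$, so $\mathbf{J}\rho=0$ and $\rho\mathbf{J}^\dagger=0$ for every density operator $\rho$ whose range lies in $W$; substituting into the form $\text{L}_{\text{D}}[\rho] = \lambda(2\mathbf{J}\rho\mathbf{J}^\dagger - \mathbf{J}^\dagger\mathbf{J}\rho - \rho\mathbf{J}^\dagger\mathbf{J})$ gives $\text{L}_{\text{D}}[\rho]=0$. Hence, as long as $\rho(t)$ remains in $W$, the master equation collapses to $\dot\rho = -i[\mathbf{H}_S,\rho]$, which integrates to the unitary flow $\rho(t) = e^{-i\mathbf{H}_S t}\rho(0)e^{i\mathbf{H}_S t}$.

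Next I would handle the two directions of the equivalence. For sufficiency, assume $\mathbf{H}_S\cdot W\subset W$. Since $\mathbf{H}_S = \sum b_{\alpha\beta}\mathbf{S}_\alpha^\dagger \mathbf{S}_\beta$ with $b_{\alpha\beta} = b_{\beta\alpha}^*$ is Hermitian, invariance of $W$ under $\mathbf{H}_S$ implies invariance under $e^{-i\mathbf{H}_S t}$ for every $t$, so $\rho(t)\in W$ throughout the evolution and both CDF conditions hold. For necessity, assume $W\subset V$ is CDF. Take any $\ket{\psi}\in W$ and set $\rho(0) = \ket{\psi}\bra{\psi}$. By condition~2, the pure-state trajectory $\ket{\psi(t)} = e^{-i\mathbf{H}_S t}\ket{\psi}$ lies in $W$ for all $t$; differentiating at $t=0$ and using that $W$ is closed in finite dimensions yields $\mathbf{H}_S\ket{\psi}\in W$. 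Since $\ket{\psi}$ was arbitrary, $\mathbf{H}_S\cdot W\subset W$. The closing statement is then immediate: $\mathbf{H}_S|_W$ is the restriction of a Hermitian operator to an invariant subspace, hence Hermitian on the finite-dimensional inner-product space $W$, and the spectral theorem furnishes an orthonormal eigenbasis.

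The only real subtlety is fixing the interpretation of ``$\rho(t)\in W$''; one needs to read it as ``the range of $\rho(t)$ lies in $W$'' and to realize that pure-state inputs suffice to extract the $\mathbf{H}_S$-invariance of $W$. After that, both directions of the equivalence collapse to the elementary observation that a subspace is preserved by the one-parameter group $\{e^{-i\mathbf{H}_S t}\}_{t\in\mathbb{R}}$ precisely when it is preserved by the generator $\mathbf{H}_S$.
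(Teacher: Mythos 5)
Your proof is correct, and it rests on the same elementary fact as the paper's --- a subspace is preserved by the one-parameter group $e^{-i\mathbf{H}_S t}$ precisely when it is preserved by the generator $\mathbf{H}_S$ --- but you reach it by a different route. The paper passes to the interaction picture with respect to $\mathbf{H}_S$, so that the entire evolution is carried by the rotated jump operator $\mathbf{J}'=e^{i\mathbf{H}_S t}\mathbf{J}e^{-i\mathbf{H}_S t}$, and reads the invariance condition off the requirement $\mathbf{J}e^{-i\mathbf{H}_S t}\ket{x}=0$ for all $t$. You stay in the Schr\"odinger picture, observe that $\text{L}_{\text{D}}$ annihilates any state supported in $V=\ker\mathbf{J}$, reduce the master equation on $W$ to the von Neumann equation, and then run the flow-versus-generator argument on pure-state trajectories. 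Your version makes explicit two things the paper leaves implicit: the ``in particular'' clause (Hermiticity of $\mathbf{H}_S$ restricted to an invariant subspace yields an orthonormal eigenbasis, which the paper merely asserts) and the meaning of ``$\rho(t)\in W$'' for mixed states. The one step worth tightening is the sufficiency direction: ``as long as $\rho(t)$ remains in $W$ the equation collapses'' is mildly circular as phrased; the clean statement is that $\sigma(t)=e^{-i\mathbf{H}_S t}\rho(0)e^{i\mathbf{H}_S t}$ is supported in $W\subset V$ by invariance, hence solves the full master equation, hence equals $\rho(t)$ by uniqueness of solutions. That is a presentational fix, not a gap.
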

\begin{proof}
Define $\rho^\prime(t) = e^{i \mathbf{H}_S t} \rho(t) e^{-i \mathbf{H}_S t}$
for 
some Hamiltonian $\mathbf{H}_S$. The equation satisfied by $\rho'$ is   
\beq 
\frac{\partial \rho^\prime}{\partial t} =
\text{L}^\prime_{\text{D}}[\rho^\prime], 
\eeq
where $ \text{L}^\prime_{\text{D}}[\mathbf{J}]=\text{L}_{\text{D}}
[e^{i \mathbf{H}_St}\mathbf{J} e^{-i\mathbf{H}_St}]$.  So,
in this picture $\rho^\prime(0)$ is DF iff
$ \text{L}^\prime_{\text{D}} [\rho^\prime(0)]=0$.  The generic
DFS are spanned by vectors $\ket{x}$ such that
$\mathbf{J}^\prime\ket{x}=e^{i\mathbf{H}_St}
\mathbf{J}e^{-i\mathbf{H}_St}\ket{x}=0$. Let $W$ be the subspace consisting
of all such vectors.  This must be satisfied for all $t$, so $\mathbf{H}_S
\cdot W \subset W$.  Conversely, if this is satisfied then  
 \beq \label{eq:cdfs}
\mathbf{J}e^{-i\mathbf{H}_S t}\ket{x}=0.
\eeq
\end{proof}
This condition is weaker than that derived in Ref.~\cite{Shab05}, but stronger
than that derived in Ref.~\cite{Lidar98}.  Throughout this paper, it is
implicitly assumed that if Eq.~\eqref{eq:cdfs} is satisfied, then other
techniques, such as quantum error correction, are applied to account for
transfer of encoded information between states within a CDFS.

We have observed that DFSs are lowest weight vectors in the
  decomposition of ${\mathbb C}^{2^N}$ into irreps of
  $su(2)$. The weight space $W(k)$ of weight $k$ in $\mathcal{H}$ is the
  eigenspace of $\mathbf{J}_z$ with eigenvalue $k$. Since $[\mathbf{J}_z,
  \mathbf{H}_S]=0$ weight spaces $W(k)$  are left invariant by
  $\mathbf{H}_S$.  Hence, it is sufficient to consider subspaces with a fixed
  weight, that is, all states in the subspace have the same number of excited
  (physical) qubits.  Note that these nontrivial
  subspaces span across several irreps, so it is necessary to combine irreps
  if one is to satisfy Eq.~\eqref{eq:cdfs}.  From the representation theory of
  $su(2)$~\cite{Gilmore}, it follows that the weight spaces have weights
  $-N,-(N-2), \dotsc, 
  N$.  For $k\leq N/2$ the dimension of space $W(k)$ for $k$ excitations with
  weight $-(N-2k)$ is $\binom{N}{k}$.  The condition $\mathbf{J} \cdot V
  (k)=0,\; 
  V(k)\subset W(k)$ allows us to write the dimensions of $V(k)$ as
\begin{align} \label{eq:dimDFS}
\text{dim}[V(k)]= \binom{N}{k}-\binom{N}{k-1}=\frac{N!(N-2k+1)}{k!(N-k+1)!}.
\end{align}
%%%%%%%%%%%%%%%%%%%%%%%%%%%%%%%%%%%%%%%%%%%%%%%%%%%%%%%%%%%%%%%%%%%%%%%%%%%%%
\section{Scalability properties and leakage timescale}
\label{sec:nqs}
%%%%%%%%%%%%%%%%%%%%%%%%%%%%%%%%%%%%%%%%%%%%%%%%%%%%%%%%%%%%%%%%%%%%%%%%%%%%
A sufficient condition for CDF dynamics derived in
Ref.~\cite{Zan97a} for $N$ qubits that are
prepared in a DFS is as follows.  If the unitary coupling satisfies
$b_{\alpha\beta} \equiv b$, then the Hamiltonian can be written 
\begin{align}
\mathbf{H}_S = b \sum_{\alpha,\beta} \mathbf{S}_{\alpha}^{\dagger}
\mathbf{S}_{\beta} = b \mathbf{J}^\dagger \mathbf{J}.
\end{align}
Thus, $\mathbf{H}_S$ is a product of the total operators $\mathbf{J}$ and
$\mathbf{J}^{\dagger}$, and the irreps are left invariant by $\mathbf{H}_S$. 
An example of this is $N$ co-located qubits that are described by 
the master equation derived in Ref.~\cite{Brooke07b}.  In this instance, the
multi-qubit level shift is the same for all qubits. Note that if
$b_{\alpha\beta} \equiv b$ is satisfied, then DFSs are stable to a symmetry
breaking perturbation~\cite{Bacon99}.

For $N$ qubits, one can estimate the effect of $\mathbf{H}_S$ on encoded
quantum information by
examining the proportion of the Hilbert space that consists of DF states 
relative to the proportion that consists of non-DF states. 
We consider vectors of the same weight, or equivalently, the same level of
excitations.  For $V(k)$ the DF subspace in $W(k)$ (with weight $-(N-2k),\;
2k\leq N$) the encoding efficiency is defined as the number of logical qubits
per number of physical qubits.  So, using Eq.~\eqref{eq:dimDFS} we define   
\begin{align}
d_{\text{DF}} \equiv \frac{1}{N}\log_2{\text{dim}[V(k)]} = \frac{1}{N}\log_2
\left[\frac{N!(N-2k+1)}{k!(N-k+1)!} \right].
\end{align}
The encoding efficiency $d_{\text{DF}}$ measures how many DF qubits can be
encoded into a Hilbert space $\mathcal{H}\equiv \otimes^N\mathbb{C}^{2}$, and
is unity for scalable encoding. Writing $k=rN$ where $r\leq 1/2$, and 
taking the limit $N \to \infty$ gives
\begin{align}
\label{eq:ddf}
d_{\text{DF}} \stackrel{\scriptsize{N \to
    \infty}}{\longrightarrow} -r\log_2 r - (1-r)\log_2 (1-r), 
\end{align}
where $N$ is the number of physical qubits, and $r$ is independent of $N$.  For $N \to \infty$,
$d_{\text{DF}}$ is a maximum for $r = 1/2$.  This is the canonical
strong-collective DFS~\cite{Kempe01}.

Encoding efficiency alone is not sufficient as a measure of scalability when
the Hamiltonian $\mathbf{H}_S$ can cause leakage of quantum information from
DF to non-DF states.  So, as a measure of the
likelihood that an arbitrary $\mathbf{H}_S$ causes quantum information to
transfer between states within a particular excitation, we define 
\begin{align}
p_{\text{DF}} \equiv \frac{\dim [V(k)]}{\dim[ W(k)]} = 1 + \frac{k}{k - 1 - N},
\end{align}
which is simply the fraction of a particular weight space that satisfies
$\mathbf{J} \cdot V(k)=0$.  So, $p_{\text{DF}}$ measures the proportion of DF
states relative to non-DF states for a particular excitation.   For $N \to
\infty$, $p_{\text{DF}}$ is   
\begin{align}
\label{eq:pdf}
p_{\text{DF}}  \stackrel{\scriptsize{N \to
    \infty}}{\longrightarrow} \frac{1 - 2r}{1 - r}.
\end{align}
\begin{figure}[t]
\begin{center} 
\subfigure[]{\label{fig:scal1}
\includegraphics[width=3.8cm,height=3cm]{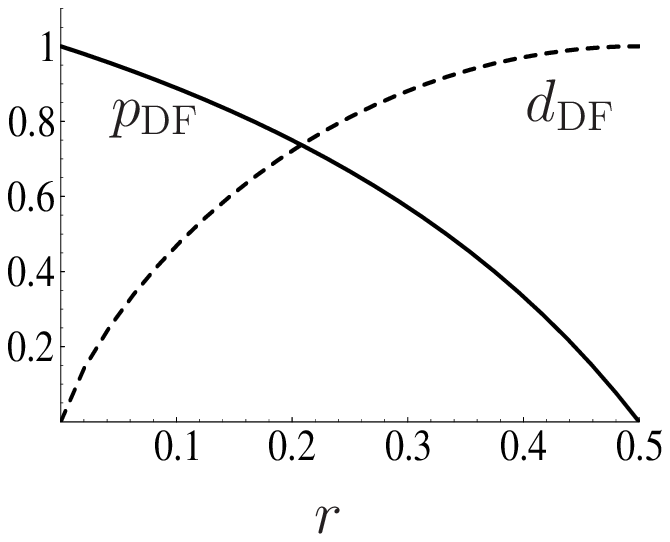}}
\subfigure[]{\label{fig:scal2}
\includegraphics[width=3.8cm,height=3cm]{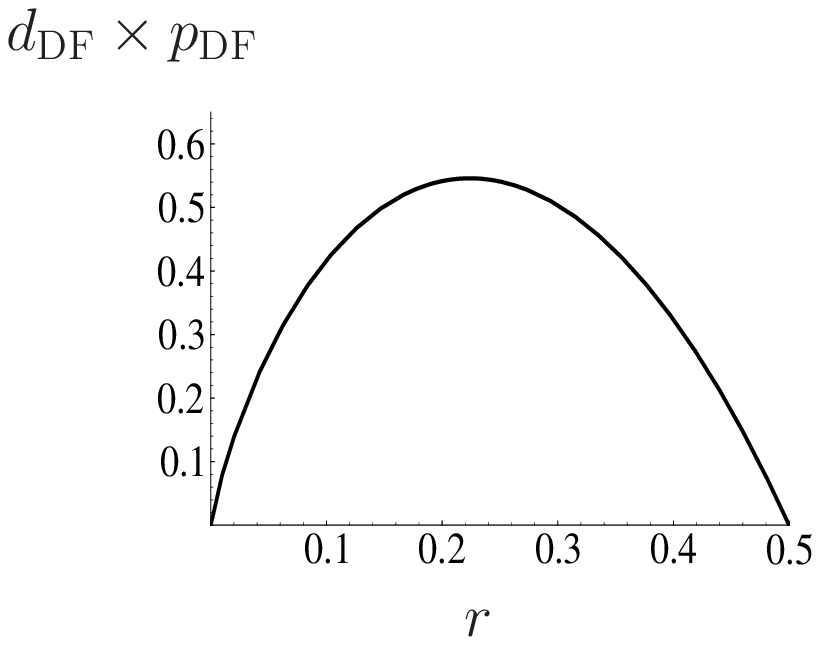}}
\end{center}
\caption{(a) Plot of $d_{\text{DF}}$ (dotted line) and $p_{\text{DF}}$ (solid
  line) and 
  (b) $d_{\text{DF}} \times p_{\text{DF}}$ for large $N$ for $d_{\text{DF}}$
  and $p_{\text{DF}}$ as 
  defined in Eqs.~\eqref{eq:ddf} 
  and~\eqref{eq:pdf} respectively.  In (b), $d_{\text{DF}} \times
  p_{\text{DF}}$ is a maximum for $r \sim \tfrac{2}{9}$.
\label{fig:nqscal} } 
\end{figure}
The smaller $p_{\text{DF}}$, the more likely that $\mathbf{H}_S$ will 
cause quantum information to evolve into non-DF states in that particular
weight space. We show $d_{DF}$ and $p_{DF}$ for large $N$ in
Fig.~\ref{fig:scal1}, and the product $d_{DF} \times p_{DF}$ in
Fig.~\ref{fig:scal2}.  For an 
arbitrary $\mathbf{H}_S$ that causes quantum
information to transfer between states within some weight space, the subspace
which maximises $d_{DF} \times p_{DF}$ is one for which $k \sim 
\tfrac{2}{9} N$.   
Of course, this assumes that one can account for the unitary evolution
caused by $\mathbf{H}_S$ within the CDFSs.  For particular forms of
$\mathbf{H}_S$, it might 
be more appropriate to encode in other weight subspaces, but if
scalability in this context is important, then care must be taken to ensure
$d_{DF} \times p_{DF}$ is large.  

As an example, we focus on the case of strong-collective
decoherence~\cite{Kempe01}.  The DF subspace in this instance has
dimension~\cite{Zan97a,Lidar98,Lidar03}   
\begin{align}
\dim[\text{DFS}(N)] = \frac{N!}{(N/2 +1)!(N/2)!},
\end{align}
for the collective basis $\ket{J,m_J}$, where $\ket{0}$($\ket{1}$) represents
  a $\ket{j =   \tfrac{1}{2},m_j =   -\tfrac{1}{2}}$($\ket{j =
  \tfrac{1}{2},m_j =  \tfrac{1}{2}}$) state. Note that we are not referring
  here to the physical angular momentum of a particle, as in
  Ref.~\cite{Kempe01} we are simply using the notation for convenience. For
  large $N$, the encoding efficiency   $d_{\text{DF}}$ is asymptotically unity.
  However, for $J = 0$ and $r= 
  1/2$, $p_{\text{DF}}  \stackrel{\scriptsize{N \to
  \infty}}{\longrightarrow} 0$.  So, the proportion of DF states in the $N/2$
  subspace is asymptotically zero, 
implying that an arbitrary $\mathbf{H}_S$ that causes information to leak from
  DF-states to non-DF states negates the encoding-efficiency of
  strong-collective DF 
subspaces.  We consider three cases: (i) $\mathbf{H}_S$ causes DF information to
  leak into all other states in
the $N/2$ subspace, (ii) $\mathbf{H}_S$ causes DF information to leak
  from $J=0$ to $J=1$ and $J=2$ states, and
(iii) $\mathbf{H}_S$ causes DF information to leak from $J=0$ to $J=1$
  states.  We define 
\begin{align}
\label{eq:pdfj}
p_{\text{DF},J_{\text{tot}}} \equiv
\frac{\dim[\text{DFS}(N)]}{\dim[\text{non-DFS}(N)]} ,
\end{align}  
for 
\begin{align}
\label{eq:den}
\dim[\text{non-DFS}(N)] = \sum_{J = 1}^{J_{\text{tot}}} \frac{(2J + 1)N!}{(N/2
  + J +1)!(N/2
 -J)!}.
\end{align}
\begin{figure}[t]
\begin{center} 
\includegraphics[width=5.5cm,height=4.cm]{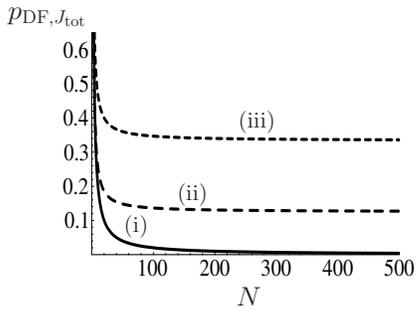}
\end{center}
\caption{Ratio $p_{\text{DF},J_{\text{tot}}}$ for (i) $J = 1, \ldots N/2$,
  (ii) $J = 1,2$, 
  and (iii) $J=1$ for $N$
  qubits, where $p_{\text{DF},J_{\text{tot}}}$ is defined in
  Eq.~\eqref{eq:pdfj}. \label{fig:dfs500} }  
\end{figure}
Note that the quantity
$p_{\text{DF},J_{\text{tot}}}$ equals unity for equal amounts of DF and non-DF
states, whereas for the same instance $p_{\text{DF}}$ equals $1/2$.
Fig.~\ref{fig:dfs500} shows $p_{\text{DF},J_{\text{tot}}}$ for $J_{\text{tot}}
= 1$, $J_{\text{tot}} = 2$, and $J_{\text{tot}} = N/2$ for 500 qubits.
Allowing quantum information to transfer to just one other subspace reduces the
encoding efficiency.  So, understanding the effect of $\mathbf{H}_S$ in
particular physical realizations 
is important for scalability. 

The effect of variations away from $b_{\alpha\beta} \equiv b$ can be
quantified as follows.  The fidelity $F(t) = \text{Tr}[ \rho_U(t) \rho(t)]$
for $\rho_U(t)$ the unwanted unitary evolution and
$\rho(t)$ the desired evolution, is a measure of the effect of the unitary
evolution on encoded quantum information.  If
$\rho_U(t) = \rho(t)$, then $F(t) = 1$ (for pure states), 
and the system serves as a perfect quantum memory.  The fidelity can
be expanded as
\begin{align}
F(t) = \sum_n \frac{1}{n!} \left( \frac{t}{\tau_n} \right)^n,
\end{align}
for
\begin{align}
\left( \frac{1}{\tau_n} \right)^n = \text{Tr}[ \{\rho_U(t) \rho(t) \}^{(n)}], 
\end{align}
where the superscript $(n)$ denotes $n^{\text{th}}$
derivative~\cite{Lidar98,Bacon99}.  The timescale 
$\tau_1^{-1} = 0$ for any $\mathbf{H}_S$, so we focus on $\tau_2^{-1}$.  This
is an estimate of the timescale over which quantum error correction (or some
other technique, eg. a corrective pulse sequence) will have
to be applied~\cite{Neil00}.  So,   
\begin{align}
\label{eq:tsc}
\frac{1}{2}\left( \frac{1}{\tau_2} \right)^2 = \bra{\psi}\mathbf{H}_S
\ket{\psi}^2 - \bra{\psi}\mathbf{H}^2_S \ket{\psi} ,
\end{align}
for $\ket{\psi}$ within a DFS.  We have assumed
the system begins in a pure state in a DFS and that the desired evolution
satisfies $\rho(t) = \rho(0)$.  We are interested in the transfer timescale,
so we assume that $\text{L}_{\text{D}}[\rho_U(t)]
\ne 0$ only for $t > \tau_2$.  If the states
$\ket{\psi}$ are eigenstates of $\mathbf{H}_S$, then $\tau_2^{-1} = 0$.   The
timescale in Eq.~\eqref{eq:tsc} also applies to transferring quantum
information between states within a DFS.         
%%%%%%%%%%%%%%%%%%%%%%%%%%%%%%%%%%%%%%%%%%%%%%%%%%%%%%%%%%%%%%%%%%%%%%%%%%%%%
\section{Small qubit systems}
\label{sec:sqs}
%%%%%%%%%%%%%%%%%%%%%%%%%%%%%%%%%%%%%%%%%%%%%%%%%%%%%%%%%%%%%%%%%%%%%%%%%%%%
In practical applications and in many theoretical proposals, the number of
qubits that are controlled in order to process quantum information is small.
Here, we examine in detail systems that might be realizable in a laboratory,
and give conditions on $b_{\alpha\beta}$ that lead to robust quantum
information storage.  We focus on three and four qubits because of the
possibility that $b_{\alpha \beta}$ is spatially dependent (which is the case
for dipole-coupled qubits).  If so,
experimental control of $b_{\alpha  \beta}$ can be
obtained through varying the spatial arrangement of the qubits. 
%%%%%%%%%%%%%%%%%%%%%%%%%%%%%%%%%%%%%%%%%%%%%%%%%%%%%%%%%%%%%%%%%%%%%%%%%%%%%
\subsection{Three qubits}
\label{sec:thqu}
%%%%%%%%%%%%%%%%%%%%%%%%%%%%%%%%%%%%%%%%%%%%%%%%%%%%%%%%%%%%%%%%%%%%%%%%%%%%
Three qubits is the smallest number of qubits that supports a DF
qubit~\cite{Knill00}.   The eigenbasis is given in Fig.~\ref{fig:3quenc}.
The qubit is encoded as
\begin{align}
\ket{1}_L&= \left \{ \begin{array}{l} 
       \ket{\sf c} = 
       \frac{1}{\sqrt{2}}(\ket{010}-\ket{100}), \label{eq:dicke0L}\\  
       \ket{\sf f} = 
       \frac{1}{\sqrt{2}}(\ket{011}-\ket{101}), \end{array} \right. \\
\ket{0}_L&= \left \{ \begin{array}{l} 
  \ket{\sf b} =
       \frac{1}{\sqrt{6}}(-2\ket{001}+\ket{010}+\ket{100}), 
\label{eq:dicke1L} \\ 
   \ket{\sf e} =
       \frac{1}{\sqrt{6}}(2\ket{110}-\ket{101}-\ket{011}), \end{array}
       \right.  
\end{align}
where the logical groupings are indicated in the first column, the
 basis used in this section is given in the second column, and
the states are expanded in terms of the single-particle basis in the third
column.  Transitions are only allowed between states with the same symmetry,
 i.e., the jump operator $\mathbf{J}$ does not cause quantum information to
 decay from $\ket{\sf e}$ to $\ket{\sf c}$ or $\ket{\sf d}$, or from $\ket{\sf f}$ to $\ket{\sf
 b}$ or $\ket{\sf d}$. It only acts within the logical groupings.  The degeneracy for each $J$
 is given by Eq.~\eqref{eq:den}. 
\begin{figure}[t]
\begin{center} 
\includegraphics[width=6cm,height=4.5cm]{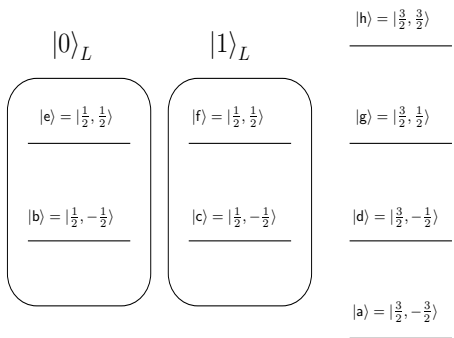}
\end{center}
\caption{DF encoding for three qubits.
  The states are labelled according to $\ket{J,m_J}$. 
  The two isolated subspaces are circled according to the logical basis.  See
  Eqs.~\eqref{eq:dicke0L} and~\eqref{eq:dicke1L} for details.  
\label{fig:3quenc} The splitting of the degeneracy due to the
  non-nearest-neighbor unitary interaction is not  
  included.  } 
\end{figure}

The off-diagonal elements of the Hamiltonian 
\begin{align}
\mathbf{H}_S =  \sum_{\alpha,\beta = 1}^3
b_{\alpha \beta} \hat{\sigma}_{\alpha +}
\hat{\sigma}_{\beta -},
\end{align}
where $\alpha$ and $\beta$ label the qubit, in the one-excitation subspace in
the collective basis are 
\begin{align}
\label{eq:hamod}
\mathbf{H}^{\text{I}}_S =& \frac{1}{\sqrt{6}}(b_{23} - b_{13}) \ket{\sf
  d} \bra{\sf c} + \frac{1}{\sqrt{3}}(b_{13} - b_{23}) \ket{\sf
  c} \bra{\sf b} \nonumber \\
&+ \frac{1}{3 \sqrt{2}}(2b_{12} - b_{13} - b_{23}) \ket{\sf
  d} \bra{\sf b} + \text{H.c.}.
\end{align}
For dipole-coupled qubits, the coefficient $b_{\alpha \beta}$ describes the
dipole-dipole interactions.  One possible physical system that might satisfy
the required conditions are nitrogen-vacancy (NV) centres in diamond.  These
can be manufactured in such a manner that the expansion
detailed in Eqs.~\eqref{eqs:rexp1} and~\eqref{eqs:rexp2} adequately
describes the physical system~\cite{Me05a}. 

A consequence of $[\mathbf{J}_z,\mathbf{H}_S]=0$ is that the Hamiltonian is
block-diagonal in the collective basis.  We examine the three cases: (i)
$b_{\alpha\beta} = b$, (ii) $b_{12} = b_{23} \ne b_{13}$, and (iii) $b_{12} \ne
b_{23} \ne b_{13}$.  It can be seen immediately for $b_{\alpha\beta} = b$ that 
$\mathbf{H}^{\text{I}}_S  = 0$, and Eq.~\eqref{eq:cdfs}
is satisfied.  It can also be seen for $b_{23} = b_{13}$ that $\ket{\sf c}$
is not acted upon by $\mathbf{H}_S$ and is CDF. 

We consider $b_{12} = b_{23} \ne b_{13}$. Introducing the states $\ket{\sf u}
= \tfrac{1}{2} \ket{\sf c} - 
\tfrac{\sqrt{3}}{2} \ket{\sf b}$ and $\ket{\sf v} = \tfrac{\sqrt{3}}{2}
\ket{\sf c} +  
\tfrac{1}{2} \ket{\sf b}$, the off-diagonal system Hamiltonian 
becomes
\begin{align}
\mathbf{H}^{\text{I}}_S = \frac{\sqrt{2}}{3}(b_{23} - b_{13}) \ket{\sf d}
\bra{\sf v} + \text{H.c.},
\end{align}
meaning that $\ket{\sf u}$ is not acted upon by $\mathbf{H}_S$ and is CDF.
So, if 
state $c_{\sf u} \ket{\sf u} + c_{\sf v} \ket{\sf v}$ for $|c_{\sf u}|^2 +
|c_{\sf 
  v}|^2 = 1$ is prepared, the second order transfer rate is 
\begin{align}
\left( \frac{1}{\tau_2} \right)^2 =& 2[b_{13} c_u^2 - \frac{1}{3}(b_{13}- 4
b_{12} )c_v^2]^2 - 2 b_{13}^2 c_u^2 \nonumber \\
&- \frac{2}{3}(6 b_{12}^2 - 4 b_{12} b_{13} +
b_{13}^2 )c_v^2,
\end{align}  
which implies that smaller differences between the elements of
$(b_{\alpha \beta})$ leads to more robust quantum information storage. The
limiting case $b_{23} \to b$, $b_{13} \to b$ recovers (i).  

Consider the most general case: $b_{12} \ne b_{23} \ne b_{13}$.  It can
be seen that there are no stationary states using 
\begin{align}
\dot{\rho} = \mathcal{M} \rho,
\end{align}
where $\mathcal{M}$ is the restriction of $\mathbf{H}_S$ and
$\text{L}_{\text{D}}$ to the one-excitation subspace. For the steady state
$\dot{\rho} = 0$, and there exists a nontrivial solution to $\mathcal{M} \rho
=0$ iff $\det(\mathcal{M}) = 0$.  This is calculated to be 
\begin{align}
\det(\mathcal{M}) = - \frac{4 \lambda^3}{27} (b_{12} - b_{23})^2 (b_{12} -
b_{13})^2 (b_{23} - b_{13})^2,
\end{align} 
where $\lambda$ is the nonzero eigenvalue of $(a_{\alpha\beta})$.  This shows
that for $\det(\mathcal{M}) = 0$, two elements of $(b_{\alpha\beta})$ must be
equal, which is case (ii) above.  Note that there are no CDF subspaces for the
general case.   
%%%%%%%%%%%%%%%%%%%%%%%%%%%%%%%%%%%%%%%%%%%%%%%%%%%%%%%%%%%%%%%%%%%%%%%%%%%%%
\subsection{Four qubits}
\label{sec:fqe}
%%%%%%%%%%%%%%%%%%%%%%%%%%%%%%%%%%%%%%%%%%%%%%%%%%%%%%%%%%%%%%%%%%%%%%%%%%%%
\begin{figure}[t]
\begin{center}
\includegraphics[width=7.5cm,height=4.5cm]{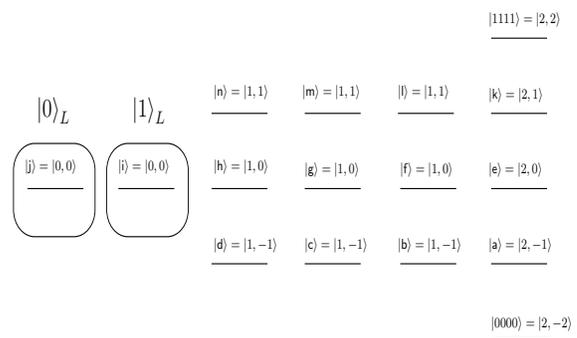}
\end{center}
\caption{\label{fig:fq} Eigenbasis for four qubits, labelled $\ket{J,
  m_J}$, with the logical DFS explicitly labelled.   The splitting of
  the degeneracy due to the non-nearest-neighbor unitary interaction is not  
  included.  }     
\end{figure} 
In order to exploit the collective properties of a system of qubits,
$\tau_2^{-1}$ gives the timescale over which one would expect to be able to
encode information without loss.  However, this timescale might be much faster
than the timescales given by the eigenvalues of $(a_{\alpha\beta})$.  In fact,
for a generic system of dipole-coupled qubits the timescale for unitary
evolution is $\sim 10^8$ times 
faster than the decay timescale~\cite{Brooke07b}.  This difference
in timescales remains for a fully renormalized theory, so for dipole-coupled
qubits DF quantum information will evolve into non-DF states much faster than
the decay rate of the system.

This timescale difference implies that one should encode to protect against
the effect of $\mathbf{H}_S$ before one 
encodes against nonunitary decoherence.  Here, we give explicit examples for
different forms of 
$b_{\alpha \beta}$ and show how one could encode to protect against unitary evolution.  A
significant result presented here is an encoding that protects against an
arbitrary $b_{\alpha \beta}$ in the two-excitation subspace.  The Hamiltonian
\begin{align}
\mathbf{H}_S =  \sum_{\alpha,\beta = 1}^4
b_{\alpha \beta} \hat{\sigma}_{\alpha +} \hat{\sigma}_{\beta -} 
\end{align}  
causes evolution between states that have the same value of $m_J$.  So,
encoding in the strong-collective DFS spanned by $\{ \ket{\sf i}, \ket{\sf j}
\}$ where
\begin{align}
\ket{\sf i} =& \frac{1}{\sqrt{2}}(\ket{01} - \ket{10})(\ket{01} - \ket{10}), \\
\ket{\sf j} =& \frac{1}{\sqrt{12}}(2 \ket{0011} + 2\ket{1100} -\ket{0101} -
\ket{1010} \nonumber \\
&- \ket{0110} - \ket{1001}) 
\end{align}
will not guarantee stable quantum memory.  

Focusing first on the one-excitation subspace, we notice that if $b_{14} =
b_{23}$ and $b_{13} = b_{24}$ then
\begin{align}
\mathbf{H}^{\text{I}}_S = \frac{1}{2}(b_{12} - b_{34}) \ket{\sf a} \bra{\sf d}
+ (b_{24} - b_{23}) \ket{\sf b} \bra{\sf c} + \text{H.c.},
\end{align}
for the basis defined in Fig.~\ref{fig:fq}. So, a logical state encoded across
$\{ \ket{\sf b}, \ket{\sf c} \}$ is not 
acted upon by $\mathbf{H}^{\text{I}}_S$, and satisfies Eq.~\eqref{eq:cdfs}.
Under 
the same conditions, in the two-excitation subspace 
the off-diagonal terms of the Hamiltonian become
\begin{align}
\mathbf{H}^{\text{II}}_S =& \sqrt{\frac{2}{3}}(b_{23} - b_{24})(
  \ket{ \sf e} \bra{\sf i} + \sqrt{2} \ket{ \sf  i} \bra{\sf j} )
  \nonumber \\ 
&+ \frac{\sqrt{2}}{3}(b_{23} - b_{12} + b_{24}
- b_{34}) \ket{ \sf
  e} \bra{\sf j} + \text{H.c.},
\end{align}
which shows that the DFS $\{ \ket{\sf i}, \ket{\sf j} \}$ is coupled to the
symmetric state.  The states that are not acted upon by
$\mathbf{H}^{\text{II}}_S$ are not lowest weight states, and so decay through
the action of $\text{L}_{\text{D}}$.  A further condition is required to
decouple the DF state $\ket{\sf i}$, namely $b_{23} = b_{24}$.  Then, there are
three states that satisfy Eq.~\eqref{eq:cdfs}: $\ket{\sf b}$, $\ket{\sf c}$
and $\ket{\sf i}$, and a logical qubit can be encoded using the most convenient
states for practical applications.  Note that we are interested in storage
times, and do not consider ease of preparation and manipulation.  See
Ref.~\cite{Brooke07a} for one possible method of preparing and manipulating a
logical qubit in a collection of dipole-coupled qubits using
globally-addressed bichromatic incident fields.   

It should be mentioned that since the coefficient of the operator $\ket{\sf j}
\bra{\sf e}$ depends on all values of $b_{\alpha \beta}$, any
perturbation away from $b_{\alpha \beta} \equiv b$ will cause information
encoded in $\ket{\sf j}$ to decohere.  This would prevent the use of an
experimentally controlled $(b_{\alpha \beta})$ being used as a single-qubit
gate on $\{ \ket{\sf i}, \ket{\sf j} \}$.    
  
We now relax the constraints on the system Hamiltonian, and consider 
arbitrary values of $b_{\alpha \beta}$.  We concentrate
on the two-excitation subspace.  The Hamiltonian can be split into two
parts 
\begin{align}
\mathbf{H}^{\text{II}}_S = \mathbf{H}^{\sf fgh}_S + \mathbf{H}^{\sf eij}_S,
\end{align}
where $\mathbf{H}^{\sf fgh}_S$  ($\mathbf{H}^{\sf eij}_S$) acts only on 
$\{ \ket{\sf f}, \ket{\sf g}, \ket{\sf h} \}$ ($\{ \ket{\sf e}, \ket{\sf i},
\ket{\sf j} \}$).  The antisymmetric DF states $\{ \ket{\sf
  i}, \ket{\sf j} \}$ are coupled to the symmetric state $\ket{\sf e}$ that
undergoes nonunitary evolution. The states of interest in the single-particle
basis are 
\begin{align}
\ket{\sf f} &= \frac{1}{\sqrt{2}}(\ket{1100} - \ket{0011}), \\
\ket{\sf g} &= \frac{1}{2}(\ket{0110} + \ket{0101} -\ket{1010} -\ket{1001}), \\
\ket{\sf h} &= \frac{1}{2}(\ket{1001} - \ket{1010} + \ket{0101} -\ket{0110}).
\end{align}
In the collective basis, the zero-logical state in the two-excitation subspace
  that, before a jump occurs,  is immune to an arbitrary $b_{\alpha \beta}$ is 
\begin{align}
\label{eq:zl}
\ket{0}_L = \frac{1}{\sqrt{\Omega^2_1 + \Omega_2^2}}(\Omega_1 
\ket{\sf g} - \Omega_2 \ket{\sf h}),  
\end{align}
for $\Omega_1 = \tfrac{1}{\sqrt{2}}(b_{14} - b_{13} - b_{23} + b_{24})$,  
$\Omega_2 = \tfrac{1}{\sqrt{2}}(b_{13} + b_{14} - b_{23} - b_{24})$, and the
temporal evolution associated with the diagonal terms in $\mathbf{H}^{\sf
  fgh}_S$ has been absorbed into $\ket{\sf g}$ and $\ket{\sf h}$.  
The one-logical state is then a combination of the two remaining eigenstates of
$\mathbf{H}^{\sf fgh}_S$ 
\begin{align}
\label{eq:ol}
\ket{1}_L&= \left \{ \begin{array}{l} 
\frac{1}{\sqrt{2(\Omega_1^2 + \Omega_2^2)}}(\sqrt{\Omega_1^2 + \Omega_2^2}
\ket{\sf f} + \Omega_2 \ket{\sf g} + \Omega_1 \ket{\sf h}),  \\ 
     \frac{1}{\sqrt{2(\Omega_1^2 + \Omega_2^2)}}(
\Omega_2 \ket{\sf g} + \Omega_1 \ket{\sf h} - 
\sqrt{\Omega_1^2 +
       \Omega_2^2}\ket{\sf f}  ). \end{array}
       \right.    
\end{align}
The states $\{\ket{\sf f},\ket{\sf g},\ket{\sf h}\}$ are coupled to each
  other, but are not coupled with states $\{\ket{\sf e},\ket{\sf 
  l},\ket{\sf j}\}$.  Using the encoding given in Eqs.~\eqref{eq:zl}
and~\eqref{eq:ol} means that, up until $\mathbf{J}$ acts on the two-excitation
subspace, the qubit will be immune to an arbitrary
environment induced non-nearest-neighbor evolution.  For the physical example
of dipole-coupled qubits this is surprising, particularly since it was shown
in Ref.~\cite{Car00} that including the dipole-dipole interaction destroyed
any collective-emission behaviour.  Also, for dipole-coupled
qubits in this regime, the timescale for unitary evolution is typically $\sim
10^8$ times quicker than that for decay, so using the encoding in
Eqs.~\eqref{eq:zl} and~\eqref{eq:ol} might have immediate benefits to
applications.  We should emphasize that quantum information will decay within
the $J = 1$ irreps, so the proposed encoding does not support perfect quantum
memory for infinite time.    
%%%%%%%%%%%%%%%%%%%%%%%%%%%%%%%%%%%%%%%%%%%%%%%%%%%%%%%%%%%%%%%%%%%%%%%%%%%%%
\section{Conclusion}
\label{sec:conc}
%%%%%%%%%%%%%%%%%%%%%%%%%%%%%%%%%%%%%%%%%%%%%%%%%%%%%%%%%%%%%%%%%%%%%%%%%%%%
It is well-known that the detrimental effect of bath-induced Hamiltonians 
is not accounted for by requiring $\text{L}_{\text{D}}[\tilde{\rho}] = 
0$, for $\tilde{\rho}$ a DFS~\cite{Lidar98}.  In this paper, we stated a
condition, similar to the conditions in Refs.~\cite{Zan97a,Lidar98,Shab05},
that ensures persistent DF quantum information in the presence
of a non-nearest-neighbor bath-induced system Hamiltonian.  We showed that,
in light of an arbitrary system Hamiltonian, as
the size of the Hilbert space increased, the strong-collective DFS
is the least suitable subspace for quantum information storage.  The most
suitable place to store quantum information in $N$ qubits---if scalability is
important---is probably the subspace with $\sim \tfrac{2}{9} N$ excitations.  We then gave a
timescale over which other methods would have to be 
applied to account for $\mathbf{H}_S$.  We then concentrated on small qubit
systems, giving specific examples for three and four qubits that we hope will
have immediate benefit to applications.  A  particularly interesting result
for four qubits was the encoding that eliminates the need to correct for
$\mathbf{H}_S$ while the qubit remains in the two-excitation subspace. 
%%%%%%%%%%%%%%%%%%%%%%%%%%%%%%%%%%%%%%%%%%%%%%%%%%%%%%%%%%%%%%%%%%%%%%%%%%%% 
\section*{Acknowledgments}
%%%%%%%%%%%%%%%%%%%%%%%%%%%%%%%%%%%%%%%%%%%%%%%%%%%%%%%%%%%%%%%%%%%%%%%%%%%%%
This project was funded by CQCT, Macquarie
University, and iCORE.  PGB acknowledges support and
hospitality during his stay at the University of Calgary where
this work was begun, and David Brooke for helpful discussions. 
\bibliography{dfslimit}    
\end{document}